\documentclass[10pt,conference]{IEEEtran}
\IEEEoverridecommandlockouts
\usepackage[dvipdf]{graphicx,color}
\usepackage{amssymb}
\usepackage{amsmath}
\usepackage{stfloats}
\usepackage{amsfonts}
\usepackage{balance}
\usepackage{color}
\usepackage{algorithm}
\usepackage{algpseudocode}
\usepackage{color}
\usepackage{varwidth}
\usepackage{multicol}
\usepackage{subfigure}
\usepackage{xspace}
\usepackage{enumerate}
\usepackage{xcolor,cite,etoolbox}
\usepackage{bm}
\usepackage{amsthm}
\newtheorem{proposition}{Proposition}

\newtheorem{remark}{Remark}
\usepackage[font=footnotesize,labelfont=footnotesize]{caption}
\usepackage{ragged2e}

\def\BibTeX{{\rm B\kern-.05em{\sc i\kern-.025em b}\kern-.08em
    T\kern-.1667em\lower.7ex\hbox{E}\kern-.125emX}}

\usepackage[font=footnotesize,skip=6.4pt]{caption}
\begin{document}

\title{Impact of CSIR, SIC, and Hardware Impairments on the Ergodic Rate of Downlink RSMA}
\author{\IEEEauthorblockN{Farjam Karim\IEEEauthorrefmark{1},   Deepak Kumar\IEEEauthorrefmark{3}, Prathapasinghe Dharmawansa\IEEEauthorrefmark{1},\\
    Nurul Huda Mahmood\IEEEauthorrefmark{1}, Arthur Sousa de Sena\IEEEauthorrefmark{1}, and Matti Latva-aho\IEEEauthorrefmark{1}%
        \thanks{} 
	\thanks{This research was supported by the Research Council of Finland (former Academy of Finland) through the 6G Flagship  (Grant Number: 369116), Business Finland's 6GBridge-6CORE Project (Grant Number: 8410/31/2022) and the Riitta Ja Jorma J. Takasen säätiö grant.}}\\

	\IEEEauthorblockA{\IEEEauthorrefmark{1}Centre for Wireless Communications, University of Oulu, Finland. \\ 
    \IEEEauthorblockA {\IEEEauthorrefmark{3} Computer and Information Engineering, Khalifa University, Abu Dhabi, United Arab Emirates.}  
		Email: \{farjam.karim,\;Prathapasinghe.KaluwaDevage,\;nurulhuda.mahmood,\;arthur.sena,\;matti.latva-aho\}@oulu.fi,\\
        deepak.kumar@ku.ac.ae
		}}

	\maketitle
	
\begin{abstract}
This work investigates the ergodic rate performance analysis of rate-splitting multiple access (RSMA) in a downlink communication system under practical impairments. Closed-form expressions are derived for key performance metrics such as ergodic rate, energy efficiency, sum-rate, and Jain’s fairness index, capturing the joint effects of imperfect channel state information at the receiver (CSIR), imperfect successive interference cancellation (SIC), and hardware impairments. Numerical simulations validate the accuracy of the analytical expressions and reveal several insightful trends. At low transmit powers, imperfect CSIR is the dominant performance-limiting factor, followed by hardware impairments and imperfect SIC. However, as the transmit power increases, hardware impairments become the primary bottleneck, with the impact of imperfect CSIR gradually diminishing, and imperfect SIC becoming a more prominent bottleneck. Moreover, RSMA consistently outperforms non-orthogonal multiple access (NOMA) in terms of ergodic rate, fairness, and sum-rate, even under severe non-idealities. These findings underscore the importance of incorporating fairness as a core design objective alongside rate and energy efficiency, positioning RSMA as a robust and strong multiple access candidate for next-generation wireless networks.
\end{abstract}
 
\begin{IEEEkeywords}
 Ergodic rate, hardware impairments, imperfect channel state information at the receiver (CSIR), rate-splitting multiple access (RSMA), user fairness.
\end{IEEEkeywords}

\section{Introduction}
The sixth generation (6G) of wireless networks is expected to revolutionize the communication landscape by enabling a fully connected, intelligent, and immersive digital world~\cite{Matti_2019}. With ambitious goals such as supporting ultra-reliable low-latency communication, massive machine-type communication and extremely high data rates, 6G networks must accommodate a highly heterogeneous set of devices, services, and performance requirements~\cite{Nurul_eucnc_2020}. These demands impose significant challenges on the physical layer, especially in terms of spectrum utilization, interference management, and scalability.

To address the growing demands of future wireless networks, next-generation multiple access (NGMA) schemes must overcome the limitations of traditional orthogonal multiple access (OMA) techniques and offer greater flexibility in interference management and spectrum utilization. Among the prominent candidates, non-orthogonal multiple access (NOMA) and rate-splitting multiple access (RSMA) have emerged as promising solutions for 6G networks~\cite{Liu_yuan_jsac_22, Mao_com_soc_22}. Although both rely on superposition coding at the transmitter, their core operational principles differ significantly. In downlink NOMA, users are assigned distinct power levels based on their channel conditions, and multi-step successive interference cancellation (SIC) is applied at the receiver to decode signals in a hierarchical order. In contrast, downlink RSMA adopts a more versatile strategy by splitting each user’s message into a common part and a private part. The common parts from all users are combined and encoded into a single common stream, while the private parts are independently encoded into separate private streams. At the receiver, each user first decodes the common stream by treating all private streams as noise. Once the common stream is successfully decoded and removed, the user proceeds to decode their own private stream while treating the private streams of other users as noise. This mechanism enables RSMA to manage interference more flexibly by partially decoding interference through the common stream and partially treating the remaining interference as noise during private stream decoding. Moreover, RSMA decoding process consists of single step SIC~\cite{Farjam_WCNC_24}. Owing to these properties, RSMA is able to achieve significant performance gains with only a single layer of SIC, eliminating the need for complex multi-step decoding. This simplicity, combined with its flexibility in interference management, positions RSMA as a highly attractive multiple access strategy for 6G wireless networks.
 
Building on these advantages, RSMA has attracted growing research interest across various domains of wireless communications. For example, the authors in~\cite{Farjam_WCNC_24} have analyzed the system performance of a simultaneous wireless information and power transfer (SWIPT)-aided downlink RSMA system, considering imperfect channel state information (CSI) and imperfect SIC at the receiver. Expressions for the outage probability, throughput, and harvested energy are derived. In~\cite{wang_tcom_25}, the authors developed an optimization framework to maximize the ergodic sum-rate for a generalized rate-splitting system under imperfect CSI at the transmitter. In~\cite{jinping_lett_24}, the authors introduced the integration of RSMA into high-mobility wireless systems with stringent finite blocklength (FBL) and block error rate (BLER) requirements, and derived a closed-form expression for the ergodic sum-rate under imperfect CSI. The authors in~\cite{liu_qiu_tvt_25}, conducted a comprehensive performance analysis and energy efficiency (EE) optimization of an unmanned-aerial vehicle assisted dual-layer heterogeneous communication network employing RSMA.  Moreover, in~\cite{nguyen_tcom_23}, a dynamic framework was developed to jointly optimize rate control and power allocation, considering uncertainties in the surrounding environment and constraints related to covert communications. Lastly, in~\cite{Farjam_twc_25}, the authors derived analytical expressions for BLER and goodput, and both linear and non-linear energy harvesting in a SWIPT-enabled-RSMA-aided system, while accounting for imperfect CSI, imperfect SIC, and hardware impairments under FBL transmission.


Although RSMA has been extensively studied, most existing works assume ideal system conditions or, at best, account only for imperfect CSI. In practice, however, wireless systems are affected by a combination of impairments including imperfect CSI at the receiver (CSIR), imperfect SIC, and non-ideal hardware impairments, that are often overlooked. Furthermore, much of the prior literature focuses on outage probability or optimization-based formulations, which may not offer the same level of analytical insight into long-term system behavior as ergodic performance analysis.

To address these gaps, this work develops a comprehensive analytical framework for an RSMA-enabled downlink system operating under practical impairments. Closed-form expressions are derived for key performance metrics, including ergodic rate, EE, and sum-rate, while explicitly incorporating the effects of imperfect CSIR, imperfect SIC, and non-ideal  hardware impairments. Additionally, to evaluate fairness, we present an analytical expression for Jain’s fairness index (JFI), enabling a deeper exploration of the trade-off between efficiency and fairness in realistic RSMA deployments. The analytical results are validated through extensive numerical simulations. Finally, a detailed comparison with NOMA is provided, demonstrating that RSMA is significantly more robust to imperfect CSIR and illustrating how idealized assumptions such as perfect CSI, perfect SIC, and ideal hardware can lead to overly optimistic performance estimates, which may not reflect the realities of future 6G systems.

\noindent \textbf{Structure of the paper:} Section~\ref{Sys_main} presents the system model adopted in this study. In Section~\ref{Sec_Analysis}, closed-form expressions are derived for the ergodic rate, EE, and JFI under both perfect and imperfect conditions. Section IV provides a comprehensive discussion of the results. Finally, Section V concludes the paper by summarizing the key findings.

\noindent \textbf{Notations:} $\mathcal{CN}(0, \sigma^2_{(\cdot)})$ denotes a circularly symmetric complex Gaussian distribution with zero mean and variance $\sigma^2_{(\cdot)}$. The probability density function (PDF) of a random variable $X$ is expressed as $f_X(x)$, while $\Gamma(\cdot)$ and $\Gamma(\cdot, \cdot)$ represents the Gamma function and the upper incomplete gamma function, respectively. The parameters $m_{(\cdot)}$ and $\hat{\Omega}_{(\cdot)}$ correspond to the shape and scale parameters of the Nakagami-$m$ fading.
 

\begin{figure}[t]
    \centering
    \includegraphics[width=0.8\columnwidth]{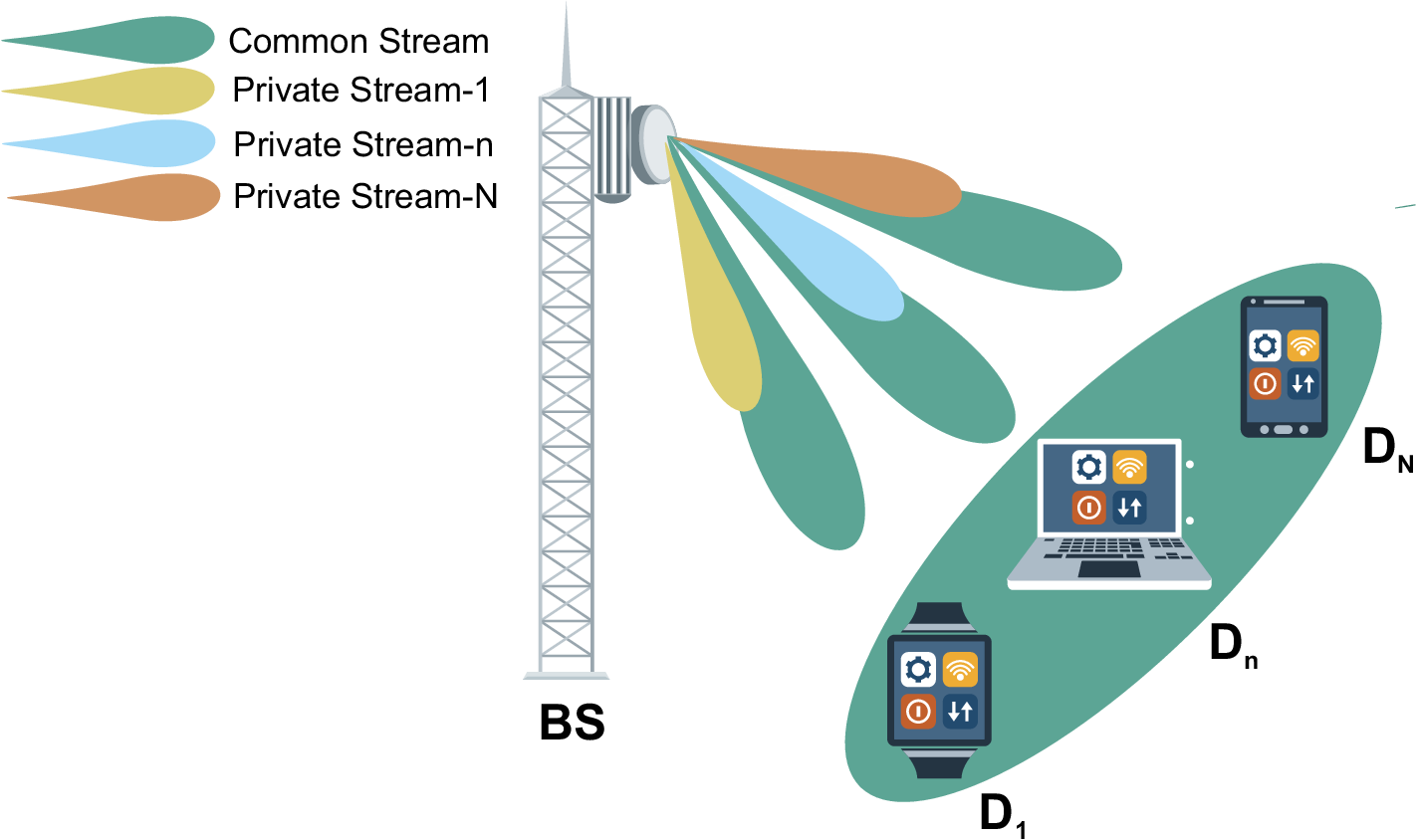}
    \caption{Illustration of a general RSMA-aided Downlink System with $N$ users sharing a single common stream for all users and $N$ private streams for each individual user.}
    \label{fig:system_diag}
\end{figure}
\section{System Model}
\label{Sys_main}
We consider a downlink RSMA-aided system, as illustrated in Fig.~\ref{fig:system_diag}, where a base station (BS) equipped with a single transmit antenna serves $N$ single-antenna users. To communicate with the users, the BS employs an RSMA scheme in which the common parts of all users’ messages are superimposed into a single common stream, denoted by $x_c$, along with $N$ private streams $x_n$, one for each user.
The transmitted signal from the BS is expressed as
$\left(x = \sqrt{P\beta_c}x_c + \sum_{n=1}^{N} \sqrt{P\beta_n}x_n\right)$,
where $\beta_c$ and $\beta_n$ are the power allocation coefficients for the common and $n_\text{th}$ private streams, respectively, and $P$ denotes the BS transmit power.
The channel gain between the BS and the $n$-th user is modeled as
$g_n = \hat{g}_n + g_{ne}$,
where $\hat{g}_n$ is the estimated channel gain, and $g_{ne}$ represents the channel estimation error (CEE)\cite{Deepak_SWIPT_SystemJ_23Mar}. All estimated channel gains are assumed to follow Nakagami-$m$ fading; hence, the squared channel amplitudes follow a Gamma distribution~\cite{Karim_WCL_25}. According to\cite{Deepak_SWIPT_SystemJ_23Mar}, the variance of the estimation error is given by
$\Omega_{gne} = \Omega_n / (1 + \rho \xi_n \Omega_n)$,
where $\Omega_n$ is the total variance of $g_n$, $\rho$ is the signal-to-noise ratio (SNR), and $\xi_n$ denotes the channel estimation quality parameter for the $n_\text{th}$ user. Accordingly, the variance of the estimated channel gain is
$\hat{\Omega}_n = \Omega_n - \Omega_{gne}$.
Thus, the signal received at the $n_{\text{th}}$ downlink user can be given as 
   \begin{align}\label{received signal}
     y_n=\Bigg[\left(\hat{g}_n +{g}_{ne}\right)\left(x
     +\eta_t+\eta_{rn}\right)\Bigg]+w_n,
 \end{align}
 where $\eta_t\sim \mathcal{CN}\left(0, P\kappa^2_{t}\right)$ and $\eta_{rn}\sim \mathcal{CN}\left(0, P\kappa^2_{rn}\right)$ denote the distortion noises, with $\kappa^2_{t}$ and $\kappa^2_{rn}$ being the level of
hardware impairments and $\omega_n\sim\mathcal{CN}(0,\sigma^2_n)$ is the additive white Gaussian noise (AWGN). To successfully retrieve its intended message, the $n_{\text{th}}$ user first decodes the common stream from the received signal. The signal-to-interference-plus-noise ratio (SINR) for decoding the common stream is given by
\begin{align} \label{SINR_comm_down}
    \gamma_{c,n} &= \frac{ \rho\beta_c|\hat{g}_n|^2}{\begin{aligned}   &1+\rho\left(\sum\limits^{N}_{n=1}\beta_n+\kappa^2_{rn}+\kappa^2_t\right)\left|\hat{g}_n\right|^2 \\
&+\rho\Omega_{g_{ne}}\left(\left(\beta_{c}+\sum^{N}_{n=1}\beta_{n}\right)+\kappa^2_{rn}+\kappa^2_t\right)
    \end{aligned}},
\end{align}
where $\rho=P/\sigma^2_{n}$. After successfully decoding the common stream, the $n_{\text{th}}$ user  attempts to decode its own private stream with a SINR of  
\begin{align} \label{SINRpvt}
    \gamma_{p,n} &= \frac{ \rho\beta_k|\hat{g}_n|^2}{\begin{aligned}   &1+\rho\left(\sum\limits_{\substack{j=1\\ j\neq{n}}}^{N}\beta_j+{\varphi_n \beta_c}+\kappa^2_{rn}+\kappa^2_t\right)\left|\hat{g}_n\right|^2 \\
&+\rho\Omega_{g_{ne}}\left(\left(\varphi_n\beta_{c}+\sum^{N}_{n=1}\beta_{n}\right)+\kappa^2_{rn}+\kappa^2_t\right)
    \end{aligned}},
\end{align}
where $\varphi_n\in(0,1)$ denotes imperfect SIC factor with $\varphi=0$ being perfect SIC.

\begin{figure*}
    \begin{align}\label{final_comm_ER}
        \zeta_1 \simeq \min_{n\in N}\Bigg[\frac{D_1^{m_n}C_1\bigl[A_2(A_1+\frac{C_1}{2})^{-1}\bigl]^{{m_n}}\Gamma(m_n+1)\exp(D_1A_2(A_1+\frac{C_1}{2})^{-1})\Gamma\left(-m_n, \left(D_1A_2(A_1+\frac{C_1}{2})^{-1}\right)\right)}{\ln(2)\Gamma(m_n)(A_1+\frac{C_1}{2})}\Bigg].
    \end{align}\hrulefill
    \begin{align}\label{pvt_ER}
        \zeta_2 \simeq \frac{D_1^{m_n}C_2\bigl[B_2(B_1+\frac{C_2}{2})^{-1}\bigl]^{{m_n}}\Gamma(m_n+1)\exp(D_1B_2(B_1+\frac{C_2}{2})^{-1})\Gamma\left(-m_n, \left(D_1B_2(B_1+\frac{C_2}{2})^{-1}\right)\right)}{\ln(2)\Gamma(m_n)(B_1+\frac{C_2}{2})}.
    \end{align}\hrulefill
   
\end{figure*}
    \section{Performance Analysis}\label{Sec_Analysis}
  In this section, we derive closed-form  expressions for the ergodic rate and JFI under both ideal and non-ideal conditions. Based on the derived ergodic rate expressions, we further analyze the EE and sum-rate of the system.
 As previously discussed, all users' common messages are jointly encoded into a single common stream, which must be decoded by every user. Each private message is independently encoded into a private stream, intended to be decoded only by its corresponding user.
Accordingly, the ergodic rate for the  $n_{\text{th}}$ user is given by
   \begin{align}\label{first}
       \mathcal{R}_{n}=\! \left(\min_{n\in N}\mathbb{E}\big[ 
\log_{2} \left(1\!+\gamma_{c,n}\right) \big]\right) \!+\! 
\mathbb{E}\big[ 
\log_{2} \left(1+\!\gamma_{p,n}\right) \Big],
   \end{align}
   where the $\min$ operator ensures that the common stream $x_c$ is decoded by all users.
 \begin{proposition}
     The approximated ergodic rate at the $n_{\text{th}}$ user considering imperfect CSIR, SIC and hardware impairments can be expressed as
     \begin{align}\label{ER_main}
         \mathcal{R}_n\simeq \zeta_1+\zeta_2,
     \end{align}
     where $\zeta_1$ and $\zeta_2$ are given in \eqref{final_comm_ER} and \eqref{pvt_ER}, respectively, on the top of this page, such that $C_1 = \rho\beta_c$, $A_1 =\rho\left(\sum\limits^{N}_{n=1}\beta_n+\kappa^2_{rn}+\kappa^2_t\right)$, $A_2=\rho\Omega_{g_{ne}}\left(\left(\beta_{c}+\sum^{N}_{n=1}\beta_{n}\right)+\kappa^2_{rn}+\kappa^2_t\right)+1$, $D_1=\frac{m_n}{\hat{\Omega}_n}$, $C_2=\rho\beta_k$, $B_1=\rho\left(\sum\limits_{\substack{j=1\\ j\neq{n}}}^{N}\beta_j+{\varphi_n \beta_c}+\kappa^2_{rn}+\kappa^2_t\right)$ and $B_2=\rho\Omega_{g_{ne}}\left(\left(\varphi_n\beta_{c}+\sum^{N}_{n=1}\beta_{n}\right)+\kappa^2_{rn}+\kappa^2_t\right)+1$.
 \end{proposition}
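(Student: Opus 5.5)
The plan is to treat the common and private terms of \eqref{first} separately, since both SINRs in \eqref{SINR_comm_down} and \eqref{SINRpvt} share one structure. Writing $X=|\hat{g}_n|^2$, the definitions in the proposition give
\begin{align*}
\gamma_{c,n}=\frac{C_1X}{A_2+A_1X},\qquad \gamma_{p,n}=\frac{C_2X}{B_2+B_1X}.
\end{align*}
Here $X$ is Gamma distributed with shape $m_n$ and rate $D_1=m_n/\hat{\Omega}_n$, so that
\begin{align*}
f_X(x)=\frac{D_1^{m_n}x^{m_n-1}e^{-D_1x}}{\Gamma(m_n)}.
\end{align*}
It therefore suffices to compute $\mathbb{E}[\log_2(1+CX/(A_2+A_1X))]$ for generic positive constants $A_1,A_2,C$. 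One then substitutes $(A_1,A_2,C_1)$ to get the term inside the $\min$, giving $\zeta_1$, and $(B_1,B_2,C_2)$ to get $\zeta_2$.

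The exact integral of $\ln(1+\gamma)$ against $f_X$ does not reduce to elementary or incomplete-gamma form. The appearance of $A_1+C_1/2$ in \eqref{final_comm_ER} suggests using the Pad\'e-type approximation $\ln(1+z)\approx 2z/(2+z)$. Setting $z=CX/(A_2+A_1X)$ gives
\begin{align*}
\ln(1+z)\approx\frac{2CX}{2A_2+2A_1X+CX}=\frac{C}{a}\cdot\frac{X}{X+b},
\end{align*}
where $a=A_1+C/2$ and $b=A_2/a$. This step is the source of the ``$\simeq$'' in \eqref{ER_main}. It is also the step I consider the main obstacle. The approximation is accurate for moderate $z$, but it must be justified, or at least numerically validated, over the relevant SINR range. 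This is reasonable here because hardware impairments make $z$ saturate at $C/A_1$, which keeps the SINR bounded.

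The expectation then reduces to
\begin{align*}
\mathbb{E}\!\left[\frac{X}{X+b}\right]=\frac{D_1^{m_n}}{\Gamma(m_n)}\int_0^\infty\frac{x^{m_n}e^{-D_1x}}{x+b}\,dx.
\end{align*}
I would evaluate this with the standard integral $\int_0^\infty x^{\nu-1}e^{-\mu x}/(x+b)\,dx=b^{\nu-1}e^{\mu b}\Gamma(\nu)\Gamma(1-\nu,\mu b)$, taking $\nu=m_n+1$ and $\mu=D_1$. This yields
\begin{align*}
\frac{D_1^{m_n}\,b^{m_n}\,\Gamma(m_n+1)\,e^{D_1b}\,\Gamma(-m_n,D_1b)}{\Gamma(m_n)}.
\end{align*}
Multiplying by $C/(a\ln 2)$ to convert to base 2, and substituting $b=A_2(A_1+C_1/2)^{-1}$, reproduces \eqref{final_comm_ER} once the $\min_{n\in N}$ is kept outside. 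Repeating the computation with the $B$-constants gives \eqref{pvt_ER}. Summing the two terms as in \eqref{first} gives \eqref{ER_main}.

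A minor check is convergence of the integral, which is immediate since $b>0$ because $A_2\ge 1$. One should also confirm that $\Gamma(-m_n,\cdot)$ with a negative first argument is well defined, which holds for any positive second argument.
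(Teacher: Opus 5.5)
Your proposal is correct and follows the paper's proof essentially step for step. Both arguments write each SINR as $CX/(A_2+A_1X)$ with $X=|\hat{g}_n|^2$ Gamma-distributed, apply the approximation $\log_2(1+x)\simeq 2x/((2+x)\ln 2)$, factor out $A_1+C/2$, evaluate the resulting integral with the standard formula (Gradshteyn--Ryzhik 3.383.10) at $\nu=m_n+1$, and keep the $\min$ outside for $\zeta_1$. One small correction to your justification: the SINR stays bounded mainly because of the inter-stream interference terms in $A_1$ and $B_1$, which are present even with ideal hardware, not because of the hardware impairments alone.
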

      \begin{proof}
  In order to derive $n_{\text{th}}$ user ergodic rate, we need to solve for \eqref{SINR_comm_down}  and then \eqref{SINRpvt}. Therefore, we write \eqref{first} as
   \begin{align}\label{down_proof}
   \mathcal{R}_{n}= \underbrace{\min_{n\in N}\mathbb{E}\big[ 
\log_{2} \left(1+\gamma_{c,n}\right)}_{\zeta_1} \big] + 
\underbrace{\mathbb{E}\big[ 
\log_{2} \left(1+\gamma_{p,n}\right)}_{\zeta_2} \Big].
\end{align}
We first evaluate the ergodic rate expression for $\zeta_1$. Let $X=|\hat{g}_n|^2$ and  by using the approximations 
 $\log_2(1+x)\simeq \frac{2x}{(2+x)\ln(2)}$, \cite{Topsoe}, which is tight for $x \leq 1$. This condition holds in our setting, as the signal and interference powers are of comparable magnitude due to the RSMA procedure. Thus, we can express $\zeta_1$ as
\begin{align}\label{down_step_1}
   \zeta_1\simeq \min_{n\in N}\int\limits^{\infty}_{0}\frac{2\left(\frac{C_1x}{A_1x+A_2}\right)}{\Bigg[2+\left(\frac{C_1x}{A_1x+A_2}\right)\Bigg] \ln(2)}f_{X}(x)dx,
\end{align}
where $C_1 = \rho\beta_c$, $A_1 =\rho\left(\sum\limits^{N}_{n=1}\beta_n+\kappa^2_{rn}+\kappa^2_t\right)$, and $A_2=\rho\Omega_{g_{ne}}\left(\left(\beta_{c}+\sum^{N}_{n=1}\beta_{n}\right)+\kappa^2_{rn}+\kappa^2_t\right)+1$. As $X$ follows a Gamma distribution and performing some mathematical simplifications, we can express \eqref{down_step_1} as 
\begin{align}\label{down_Step_2}
        \zeta_1\simeq \min_{n\in N}\Bigg[\frac{C_1 D_1^{m_n}}{\ln(2)\Gamma(m_n)}\int\limits^{\infty}_{0}\frac{x^{m_n}\exp(-D_1x)}{A_1x+A_2+\frac{C_1x}{2}}dx\Bigg],
         \end{align}
where $D_1=\frac{m_n}{\hat{\Omega}_n}$. By taking out the term $(A_1+\frac{C_1}{2})$ as common from the denominator of \eqref{down_Step_2}, we obtain
\begin{align}\label{down_Step_3}
     \zeta_1\!\simeq\!\min_{n\in N} \!\Bigg[\frac{C_1 D_1^{m_n}}{\ln(2)\Gamma(m_n) (A_1\!+\!\frac{C_1}{2})}\!\!\int\limits^{\infty}_{0}\!\!\!\frac{x^{m_n}\exp(-D_1x)}{\bigl[A_2(A_1+\frac{C_1}{2})^{-1}\!+\!x\bigl]}dx\Bigg].
\end{align}
Using~\cite[Eq. $3.383.10$]{2015249}, we solve the integral and obtain the expression for $\zeta_1$ in \eqref{final_comm_ER}. Similarly, we solve for $\zeta_{2}$ and obtain the ergodic rate expression in \eqref{pvt_ER}. Finally using \eqref{final_comm_ER} and \eqref{pvt_ER} expressions, we obtain \eqref{ER_main}.
 \end{proof}
 \begin{remark}
     By substituting $\xi_n\to\infty$, $\varphi_n=0$, and $\kappa^{2}_{rn}=\kappa^{2}_{t}=0$, the ergodic rate for the ideal case can be obtained.
 \end{remark}
 \begin{remark}
  The system-level EE, accounting for imperfect CSIR, SIC, and hardware impairments, is expressed as
$\varphi_{\text{EE}} = \frac{\sum_{n=1}^{N} \mathcal{R}_n}{P+P_{c}},$
where \(\sum_{n=1}^{N} \mathcal{R}_n\) is the system sum rate, $P_c$ is the BS circuit power consumption and $P$ is the BS transmit power. Moreover, leveraging Remark.~1, the system-level EE and sum-rate can be obtained for the ideal case.
 \end{remark}
 \begin{proposition}
     The JFI considering imperfect SIC and hardware impairments can be evaluated as
     \begin{align}\label{JFI}
         \mathbb{J} = \frac{\left(\sum^{N}_{n=1}\mathcal{R}_n\right)^2}{N\sum^{N}_{n=1}\mathcal{R}^2_n},
     \end{align}
     where $R_n$ is given in \eqref{ER_main}. By using Remark.~1, the JFI for the ideal scenario can also be obtained.
 \end{proposition}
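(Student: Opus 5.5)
The plan is to obtain \eqref{JFI} by applying the definition of Jain's fairness index to the per-user ergodic rates of Proposition~1. There is no new integral to evaluate; the substantive work is checking that the quantities fed into the index are well defined and finite.

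\textbf{Step 1 (definition).} For a nonnegative allocation vector $(x_1,\dots,x_N)$, not identically zero, Jain's index is $\mathbb{J}(x)=\left(\sum_{n} x_n\right)^2/\left(N\sum_n x_n^2\right)$. I take the ergodic rate $\mathcal{R}_n$ of the $n_{\text{th}}$ user as the resource allocated to that user. This is natural because \eqref{first} already counts the full common-stream rate (the minimum across users) plus the private-stream rate for each user.

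\textbf{Step 2 (substitution).} I substitute $x_n=\mathcal{R}_n\simeq\zeta_1+\zeta_2$ from \eqref{ER_main}. Here $\zeta_1$ is evaluated once via \eqref{final_comm_ER}, since it does not depend on $n$ after the minimum. By contrast, $\zeta_2$ carries the user-specific parameters $m_n$, $\hat{\Omega}_n$, $\varphi_n$, $\kappa^2_{rn}$ and $\Omega_{g_{ne}}$ through $B_1$, $B_2$, $C_2$ and $D_1$. This directly gives \eqref{JFI}, which is closed form because each $\mathcal{R}_n$ is closed form.

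\textbf{Step 3 (well-posedness).} I verify that each $\mathcal{R}_n$ is finite and nonnegative, so the index lies in $[1/N,1]$ by Cauchy--Schwarz. The integrands in \eqref{down_Step_2} are nonnegative. Finiteness follows because $D_1A_2(A_1+C_1/2)^{-1}>0$, and $\Gamma(-m_n,z)$ is finite for $z>0$. This step is the only one with a small subtlety. I would handle it by noting $A_2,B_2\ge 1$ and $D_1>0$, so the lower argument of the incomplete gamma function is strictly positive. Provided some $\beta$ is positive, the sum $\sum_n\mathcal{R}_n>0$, so the denominator in \eqref{JFI} is nonzero.

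\textbf{Step 4 (ideal case).} I apply Remark~1, i.e.\ $\xi_n\to\infty$ (hence $\Omega_{g_{ne}}\to0$ and $A_2,B_2\to1$), $\varphi_n=0$ and $\kappa^2=0$, inside each $\mathcal{R}_n$. Because \eqref{JFI} is a continuous function of $(\mathcal{R}_1,\dots,\mathcal{R}_N)$ away from the origin, the limits pass through and give the ideal JFI.
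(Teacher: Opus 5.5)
Your proposal is correct and matches the paper, which gives no separate derivation: the JFI is simply Jain's definition evaluated at the per-user ergodic rates $\mathcal{R}_n\simeq\zeta_1+\zeta_2$ of Proposition~1, and the ideal case follows from the substitutions of Remark~1. Your well-posedness checks go beyond what the paper provides: positivity of $D_1A_2(A_1+C_1/2)^{-1}$, finiteness of $\Gamma(-m_n,z)$ for $z>0$, and the bound $\mathbb{J}\in[1/N,1]$. One small point for the ideal limit is that $D_1$ also changes, since $\hat{\Omega}_n\to\Omega_n$ as $\xi_n\to\infty$.
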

\section{Numerical Results}\label{Sec:Discussion}
 This section presents Monte Carlo simulation results to validate the derived closed-form expressions.
The wireless link between the BS and the $n_{\text{th}}$ user is characterized using a distance-based path-loss model, given by $\delta/\mathcal{D}_n^{\tau_{n}}$, where  $\delta = 1$~meter is the reference path-loss parameter, $\mathcal{D}_n$ is the distance and $\tau_{n}$ is the path-loss exponent. Note that, all the derived expressions are valid for $N$ devices. However, for clarity, results are plotted only for two downlink users, $D_1$ and $D_2$. 
  \begin{table}[h!]	\renewcommand{\arraystretch}{1.0}
		\centering
		\caption{ Simulation Parameters.}
		\label{t3}
			\resizebox{\columnwidth}{!}{\begin{tabular}{|l|l|l|l|l|l|}
			\hline
			Parameter         & Value         & Parameter & Value  & Parameter & Value  \\ \hline
			$m_1  $   &     $4 $   	&  $m_{2}$  &     $4$ & $\sigma^2_n$   &     $-100 $ dB \\ \hline
			
			$\beta_{c}$       & $0.6$ &	   $\beta_{1}$       & $0.25 $ &  $\beta_2 $  &     $0.15$  	 \\ \hline
			
		 BS-$D_1$      & $135 $~m 	&  BS-$D_2$     & $120 $~m 	&    $\tau_n$    &   $3.6$   	 \\ \hline




             
		
		\end{tabular}}
	\end{table}

        \begin{figure*}[!t]
  \begin{minipage}[b]{0.495\textwidth}
    \centering
    \includegraphics[width=\textwidth]{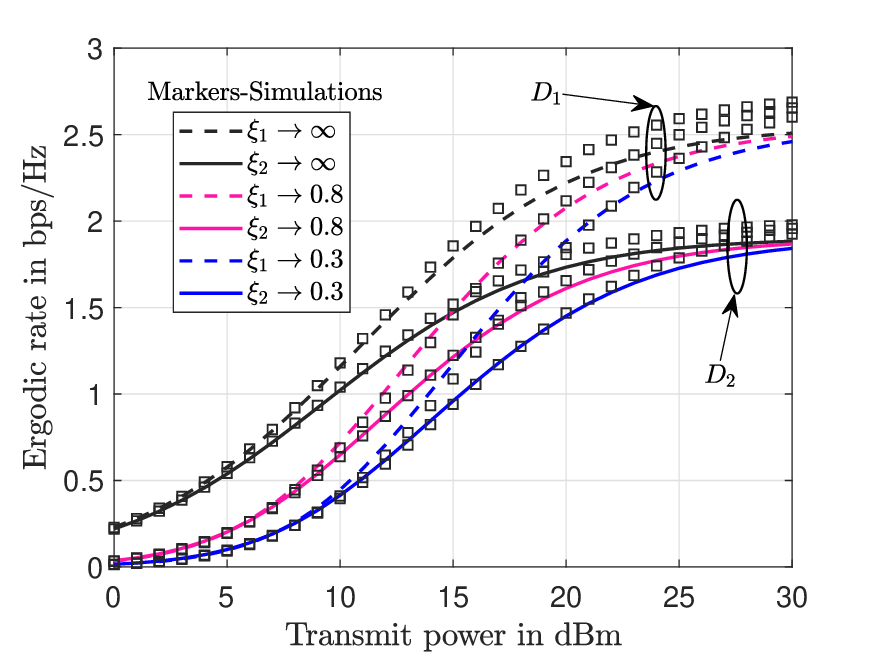}
    \caption{Impact of $\xi_n$ on the user's ergodic rate.}
    \label{CEE}
  \end{minipage}
  \begin{minipage}[b]{0.495\textwidth}
    \centering
    \includegraphics[width=\textwidth]{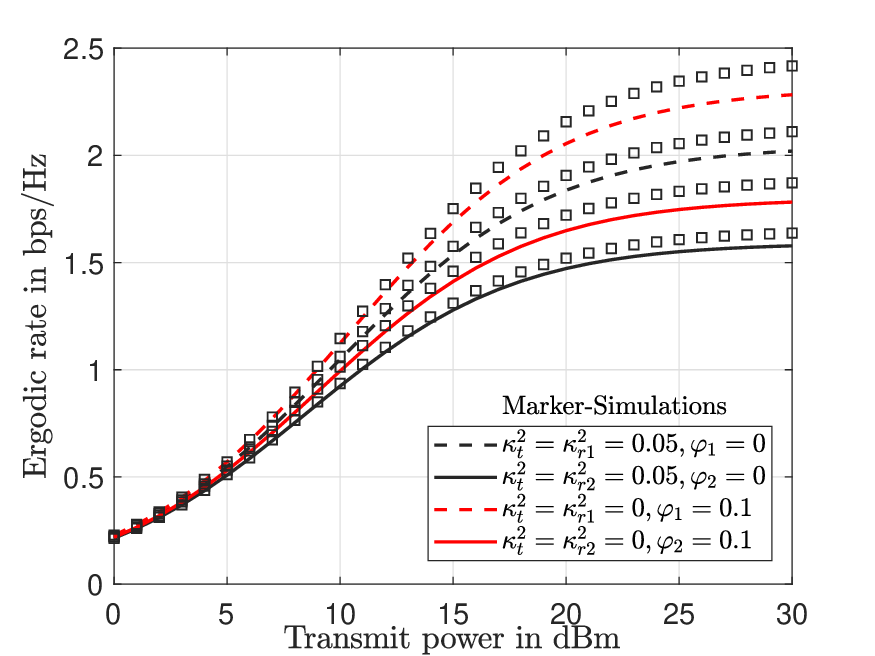}
    \caption{Effect of imperfect SIC and hardware impairments.}
    \label{SIC_HI}
  \end{minipage}
 \end{figure*}
 
Fig.~\ref{CEE} and Fig.~\ref{SIC_HI} illustrate the ergodic rate $\mathcal{R}_n$, measured in bits per second per Hertz (bps/Hz), for users $D_1$ and $D_2$ as a function of the BS transmit power $P$. In Fig.~\ref{CEE}, $\mathcal{R}_1$ and $\mathcal{R}_2$ are plotted for different values of the CEE parameter $\xi_n$, assuming perfect SIC and ideal hardware, thereby isolating the effect of imperfect CSIR on system performance. Conversely, Fig.~\ref{SIC_HI} analyzes the impact of imperfect SIC and hardware impairments, assuming perfect CSIR to enable a fair comparison.
In both figures, the analytical curves closely match the simulation results, validating the accuracy of the derived expressions and confirming the reliability of the associated corollaries under both ideal and non-ideal conditions. As expected, $\mathcal{R}_n$ increases with increasing $P$, but saturates beyond approximately $P \geq 25$~dBm. This saturation is primarily due to  interference from other users' streams.
A comparative analysis of the two figures offers further insights. Imperfect CSIR has a more pronounced impact at low transmit powers; however, its effect diminishes as $P$ increases, consistent with the CEE model, where channel estimation improves with SNR. This trend is evident in Fig.~\ref{CEE}, where the curves for different $\xi_n$ values $\in$ $\{0.3, 0.8, \infty\}$ begin to converge and saturate around $P \approx 28$~dBm, indicating reduced sensitivity to channel estimation accuracy at higher power levels.
In contrast, the degradation due to imperfect SIC and hardware impairments persists and even becomes more pronounced at higher transmit powers. For instance, when $\xi_n = 0.3$ or $0.8$ in Fig.~\ref{CEE}, both $\mathcal{R}_1$ and $\mathcal{R}_2$ remain well below $0.5$ bps/Hz at $P\approx 5$~dBm. However, for the same  transmit power in Fig.~\ref{SIC_HI}, the rates approach approximately above $0.5$ bps/Hz, highlighting the relatively milder impact of imperfect SIC and hardware at low power levels. At higher power levels ($P \approx 28$~dBm), $\mathcal{R}_1$ and $\mathcal{R}_2$ in Fig.~\ref{CEE} reach about $2.5$ and $1.75$ bps/Hz, respectively, whereas in Fig.~\ref{SIC_HI}, the rates are lower, demonstrating the increasing impact of SIC and hardware imperfections.
Moreover, between the two impairments, hardware impairments have a more severe effect on $\mathcal{R}_n$ than imperfect SIC, even when both impairment factors measurement are within the same range $(0,1)$. This behavior stems from two key reasons: (i) hardware impairments affect the entire transmission chain starting from the common stream decoding, whereas the impact of imperfect SIC begins only after the common stream is decoded, and manifests as residual interference; (ii) the distortion from hardware impairments scales directly with both transmit power and channel gain, while the effect of imperfect SIC is first attenuated by the power allocated to the common stream, thereby resulting in comparatively less degradation.

 \begin{figure*}
    \begin{minipage}[b]{0.33\textwidth}
    \centering
    \includegraphics[width=\textwidth]{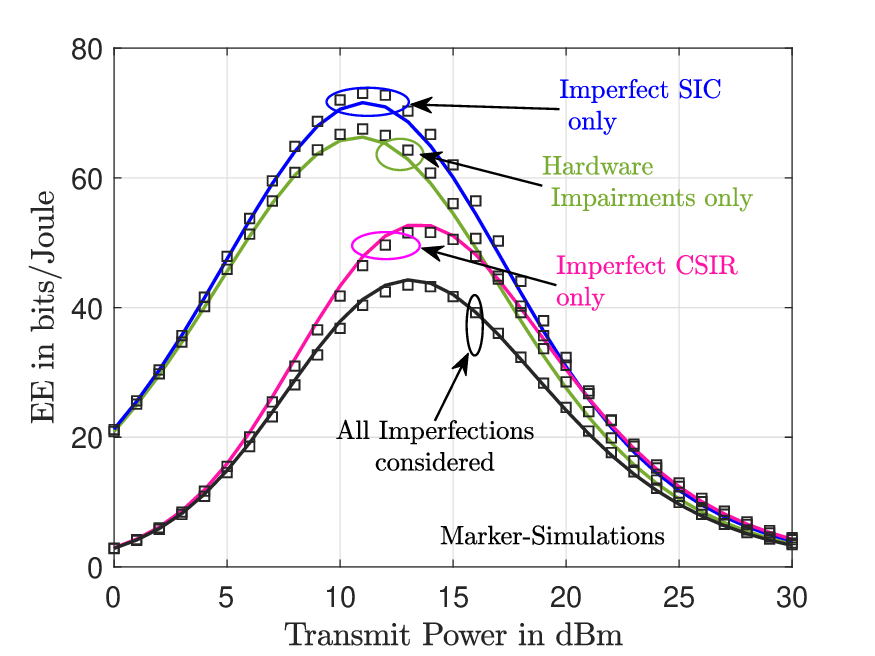}
     \caption{System's Energy Efficiency.}
     \label{EE}
     \end{minipage}
     \begin{minipage}[b]{0.33\textwidth}
    \centering
    \includegraphics[width=\textwidth]{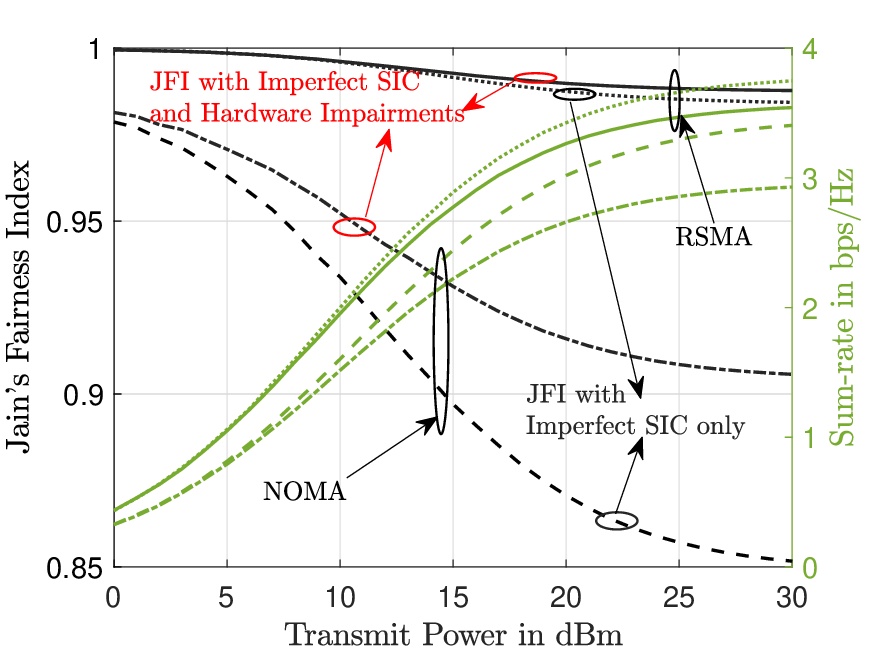}
    \caption{Fairness between RSMA and NOMA.}
    \label{JFI}
  \end{minipage}
  \begin{minipage}[b]{0.33\textwidth}
    \centering
    \includegraphics[width=\textwidth]{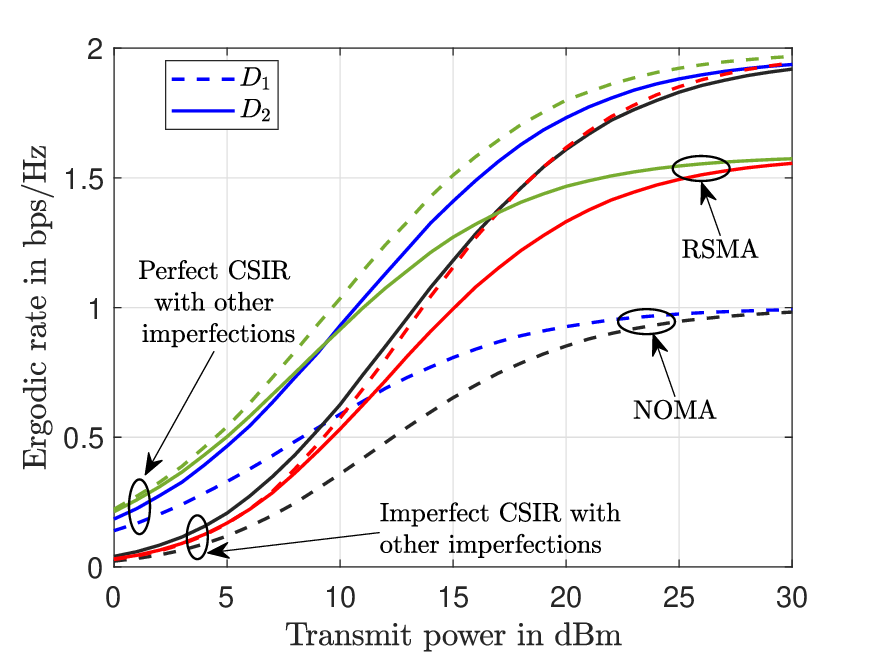}
    \caption{Comparison between RSMA and NOMA.}
    \label{Compare}
  \end{minipage}
 \end{figure*}

Fig.~\ref{EE} presents the system EE as a function of the BS transmit power $P$, considering the joint effects of imperfect CSIR, imperfect SIC, and hardware impairments. The circuit power consumption is fixed at $0.02$~Watts,  $\xi_n\to 0.7$, $\varphi_n=0.1$ and $ \kappa^2_t = \kappa^2_{rn} = 0.05 $. As evident from the figure, the analytical curves closely align with the simulation results, validating the accuracy of the analytical expression provided in Remark~2. As expected, EE initially increases with $P$, since the ergodic rate improves rapidly while the total power consumption grows more slowly. This is typical of the noise-limited regime, where an increase in transmit power directly enhances data rate with relatively less power consumption. However, beyond a certain transmit power threshold, EE peaks and begins to decline. This is because the ergodic rate increases logarithmically with $P$ due to the Shannon capacity formula, while the total power consumption, comprising $P$ and the fixed $P_c$, continues to rise linearly. As a result, the energy cost per bit increases, and EE begins to degrade, indicating diminishing returns. 
Further insights emerge when comparing the influence of different system imperfections. Imperfect CSIR has the most pronounced negative impact on EE at low transmit power levels, particularly up to around $P \approx 14$~dBm. At very low $P$ (e.g., $P \approx 2$~dBm), imperfect SIC and hardware impairments have a similar impact on EE. However, as $P$ increases, the degradation caused by hardware impairments becomes more significant than that caused by imperfect SIC. Additionally, at around $P \approx 16$~dBm, the EE under imperfect CSIR surpasses that under hardware impairments, and by $P \approx 21$~dBm, it is roughly equal to the EE observed with imperfect SIC. These observations reinforce earlier results from Fig.~\ref{CEE} and Fig.~\ref{SIC_HI}, emphasizing that system imperfections, especially hardware impairments can substantially influence system performance at moderate-to-high transmit powers. Therefore, for energy-efficient system design, it is critical to account for these impairments, particularly in practical 6G and beyond communication scenarios, where high data rates and power efficiency are both essential.

Fig.~\ref{JFI} presents the JFI (left $y$-axis) and the sum-rate (right $y$-axis) for both RSMA and NOMA schemes, plotted against varying transmit power~$P$. The results are shown under both ideal and non-ideal hardware impairments, with imperfect SIC and perfect CSIR assumed throughout. The impairment parameters are set to $\varphi_n = 0.1$ and $\kappa_t^2 = \kappa_{rn}^2 = 0.05$. For NOMA, power allocation coefficients for users $D_1$ and $D_2$ are set to 0.55 and 0.45, respectively.
Several key observations can be drawn from the JFI curves. First, RSMA consistently demonstrates superior fairness compared to NOMA across both impairment scenarios, highlighting its robustness. Second, and perhaps counterintuitively, the presence of system imperfections improves fairness. For instance, under NOMA with only SIC imperfections, the JFI begins around $0.97$ and gradually decreases with increasing~$P$. In contrast, when both SIC imperfections and hardware impairments are present, the JFI starts around $0.98$ and remains comparatively higher across all power levels. A similar trend is observed for RSMA, where the combination of imperfect SIC and hardware impairments results in higher fairness than the case with ideal hardware and imperfect SIC.
This seemingly paradoxical behavior can be attributed to the equalizing effect of impairments. Specifically, imperfect SIC and hardware distortions disproportionately degrade the performance of stronger users, thereby narrowing the rate gap between users. As a result, ergodic rates become more balanced, and fairness as measured by JFI  improves. In this context, system imperfections unintentionally act as fairness enhancers by limiting user dominance.
Meanwhile, the sum-rate trends (right $y$-axis) behave as expected: both RSMA and NOMA exhibit increasing rates with higher transmit power~$P$. However, scenarios with only imperfect SIC consistently yield higher sum-rates than those also affected by hardware impairments. Among the two schemes, RSMA maintains a consistent sum-rate advantage over NOMA across all power levels, thanks to its flexible message-splitting architecture and more effective interference management.
When JFI and sum-rate curves are considered jointly, an important design trade-off emerges: enhanced fairness under imperfections often comes at the cost of reduced sum-rate. Although imperfections degrade spectral efficiency, they simultaneously reduce the disparity between user rates, thereby improving fairness. This highlights a broader insight: fairness can be promoted not only through optimal resource allocation but also by constraining the performance of dominant users. Such constraints, however, must be carefully tuned to avoid disproportionately penalizing stronger users.
This trade-off has significant implications for the design of 6G and beyond wireless systems. Fairness should be treated as a primary performance metric not merely a by-product of ergodic rate maximization. A balanced approach is essential, depending on whether the application prioritizes peak data rates or consistent service quality across all users.

Fig.~\ref{Compare} illustrates the individual user ergodic rates for RSMA and NOMA under both perfect and imperfect CSIR, in the presence of non-ideal hardware impairments and imperfect SIC. For RSMA, the CSIR quality factor is set to $\xi_n \to 0.7$, while for NOMA, it is more favorable at $\xi_n \to 0.9$. Both schemes assume identical hardware impairment levels, with $\kappa^2_t = \kappa^2_{rn} = 0.05$ and SIC imperfection $\varphi_n = 0.1$.
While Fig.~\ref{JFI} already established RSMA's sum-rate advantage, this figure highlights that the benefit also extends to the individual user level. Notably, even under more challenging conditions for RSMA (i.e., lower CSIR quality), it still outperforms NOMA for $P \geq 19$~dBm. Below this power level ($P \leq 19$~dBm), RSMA remains close to NOMA, indicating its robustness to CSIR imperfections.
When perfect CSIR is assumed, and only hardware impairments and SIC imperfections are present, RSMA consistently delivers higher ergodic rates than NOMA. This performance gain is primarily due to RSMA’s flexible message-splitting architecture, where the inclusion of a common stream enables efficient interference management and more balanced resource allocation.
An interesting observation is the user-wise rate disparity between the two schemes. In NOMA, user $D_2$ consistently achieves higher rates than $D_1$ due to the decoding order imposed by the SIC process, which inherently favors the far user. In contrast, RSMA users do not decode others’ private messages, and thus, $D_1$ often benefits more depending on power allocation.
Lastly, it's important to note that in RSMA, interference from other users always remains while decoding private streams, even under ideal conditions. In NOMA, however, the last-decoded user ideally sees no interference if perfect SIC or ideal hardware impairments are considered, allowing its rate to grow more rapidly with increasing transmit power. This highlights the importance of incorporating practical imperfections when evaluating and comparing multiple access strategies for future networks.

\section{Conclusion}
In this work, we presented a comprehensive ergodic performance analysis of an RSMA-aided downlink system under practical impairments, including imperfect CSIR, non-ideal SIC, and hardware imperfections conditions often overlooked in prior works. Closed-form expressions were derived for key performance metrics such as ergodic rate, EE, sum-rate, and JFI, providing valuable system-level insights.
Our results revealed several key findings. First, the impact of imperfections is highly nuanced: while imperfect CSIR may degrade performance at low transmit powers, its negative effect diminishes at higher power levels. second, hardware impairments become increasingly dominant at high transmit powers, imposing significant limitations on both ergodic rate and EE. Third, RSMA consistently outperforms NOMA in both ergodic rate and fairness with imperfections even under worse conditions. Moreover, imperfections up to a certain extent can paradoxically enhance fairness. 
A clear trade-off was also observed between sum-rate and fairness under realistic conditions, reinforcing the need to treat fairness not as a secondary metric, but as a fundamental design criterion in future 6G systems. Notably, despite the relatively simple analytical setup, the insights obtained were both rich and impactful. This reflects the principle of Occam’s Razor, where simpler models, when properly formulated, can reveal deep and reliable understanding of complex nuances within the system. In this context, RSMA’s ability to sustain high spectral efficiency while significantly improving fairness under realistic conditions strengthens its position as a strong candidate for balanced performance in next-generation wireless networks.
 \bibliographystyle{IEEEtran_renamed}
	\bibliography{referencing}
\end{document}